\documentclass[a4paper,10pt]{article}
\usepackage{test,graphicx,enumerate}

\title{Unification of Maximum Entropy and Bayesian Inference via Plausible Reasoning}
\author{Alexis Akira Toda \thanks{Department of Economics, Yale University.  Email: \href{mailto:alexisakira.toda@yale.edu}{alexisakira.toda@yale.edu}} \thanks{This paper benefited from conversations with Sylvain Barde and Sander Heinsalu.  I am deeply indebted to my thesis advisor, Donald Brown.  I thank Duncan K. Foley for introducing me to the works of Jaynes.  The financial supports from the Cowles Foundation, the Nakajima Foundation, and Yale University are greatly acknowledged.}}
\date{This Version: \today}

\numberwithin{equation}{section}
\newcommand{\maxent}{MaxEnt}
\newcommand{\diff}{\mathrm{d}}

\begin{document}
\maketitle

\begin{abstract}
This paper modifies Jaynes's axioms of plausible reasoning and derives the minimum relative entropy principle, Bayes's rule, as well as maximum likelihood from first principles.  The new axioms, which I call the \emph{Optimum Information Principle}, is applicable whenever the decision maker is given the data and the relevant background information.  These axioms provide an answer to the question ``why maximize entropy when faced with incomplete information?"
\end{abstract}

\section{Introduction}
Bayesian inference \cite{bayes1763} and the maximum entropy principle (\maxent) of Jaynes \cite{jaynes1957a} are valid methods of inference when the decision maker is faced with incomplete information.  Although these methodologies are quite distinct, they often give similar results.  A few authors have hinted at the possibility of deriving both methods from first principles.  For instance, as the sample size increases, \cite{vancampenhout-cover1981} showed that the distribution of a random variable conditional on empirical moment constraints (computed by Bayes's rule) converges to the minimum relative entropy distribution subject to the same population moment constraints.  Conversely, \cite{zellner1988} showed that Bayes's rule can be derived from a variational principle of information processing.

One possibility of deriving \emph{both} the maximum entropy principle and Bayes's rule is to axiomatize plausible reasoning, as \cite{keynes1921,jeffreys1939,cox1946,jaynes2003} attempted.  In the most primitive form, Jaynes \cite{jaynes2003} suggested desiderata that should be employed in plausible reasoning, by which he deduced Bayes's rule.  To apply Bayes's rule we have to start from some priors, and Jaynes advocates the use of the maximum entropy principle to set up priors.  However, there are many situations in which \emph{both} {\maxent} and Bayesian inference are applicable.  Which method should we take then?  And do they return the same result?  In this paper I propose a different set of axioms of plausible reasoning, by which I derive the minimum relative entropy principle\footnote{To the best of my knowledge, the minimum relative entropy principle was first introduced by Kullback \cite[p.~37]{kullback1959} under the name the principle of \emph{minimum discrimination information}.}, Bayes's rule, and maximum likelihood.

I proceed in two steps.  First, I list the desiderata of a measure of information gain when a decision maker updates the plausibility of a proposition upon receiving new information.  From these desiderata I derive the functional form of information gain.  Second, I impose the decision maker to be maximally conservative, given all the relevant information.  That is, the decision maker updates the plausibilities by minimizing the average information gain (\ie, sticks to his or her prior as much as possible) subject to all relevant information, which I call the \emph{Optimum Information Principle}.  I show that the Optimum Information Principle implies the well-known minimum relative entropy principle, the Bayes rule, and also Jaynes's axioms.

\section{Axioms of Plausible Reasoning}
Viewing probability as the plausibility of a proposition dates back at least to Keynes \cite{keynes1921}.  As Cox \cite{cox1946} describes it ``as if Euclid had placed the Pythagorean theorem among the axioms of plane geometry", Keynes's axioms were not fundamental, and have been improved by \cite{jeffreys1939} and \cite{cox1946}.  To date the most primitive axioms of plausible reasoning seem to be those of Jaynes \cite[pp.~17--19]{jaynes2003}:
\begin{enumerate}[J-I.]
\item\label{item:jaynes.1} Degrees of plausibility are represented by real numbers.
\item\label{item:jaynes.2} Qualitative correspondence with common sense.
\item\label{item:jaynes.3} Consistency.
\begin{enumerate}
\item\label{item:jaynes.3a} If a conclusion can be reasoned out in more than one way, then every possible way must lead to the same result.
\item\label{item:jaynes.3b} The robot\footnote{The ``robot" is a machine that performs plausible reasoning according to the desiderata.} always takes into account all of the evidence it has relevant to a question.  It does not arbitrarily ignore some of the information, basing its conclusions only on what remains.  In other words, the robot is completely nonideological.
\item\label{item:jaynes.3c} The robot always represents equivalent states of knowledge by equivalent plausibility assignments.  That is, if in two problems the robot's state of knowledge is the same (except perhaps for the labeling of the propositions), then it must assign the same plausibilities in both.
\end{enumerate}
\end{enumerate}
Desideratum \ref{item:jaynes.2} means the following.  If we denote the plausibility of a proposition $A$ given information $I$ by $p(A|I)$, then
\begin{align}
p(A|C')>p(A|C)&\Longrightarrow p(\lnot A|C')<p(\lnot A|C),~\text{and}\label{eq:2.1}\\
\left.\begin{aligned}p(A|C')&>p(A|C)\\
p(B|A\land C')&=p(B|A\land C)
\end{aligned}\right\}&\Longrightarrow p(A\land B|C')\ge p(A\land B|C).\label{eq:2.2}
\end{align}
In words, \eqref{eq:2.1} says that if information $C$ gets updated to $C'$ in such a way that the plausibility of $A$ is increased, then the plausibility of the negation of $A$ is decreased; \eqref{eq:2.2} says that if, in addition, the plausibility of $B$ given $A$ is unchanged, then the plausibility that both $A$ and $B$ are true must increase.  Chapter 2 of \cite{jaynes2003} shows that desiderata \ref{item:jaynes.1}--\ref{item:jaynes.3b} imply that plausibilities have a probability representation and they obey Bayes's rule, and that desideratum \ref{item:jaynes.3c} implies Laplace's Principle of Indifference \cite{laplace1812} for setting up priors.

In order to derive {\maxent} and Bayes's rule, I first axiomatize the quantity which I call \emph{information gain} and derive its functional form.  The axioms, which are all intuitively appealing, are as follows.
\begin{enumerate}[{IG}-1.]
\item\label{item:e.1} Numerical representation: the information gain $I$ is a function of prior plausibility $p$ and posterior plausibility $q$.
\item\label{item:e.2} Smootheness and monotonicity: the information gain is a smooth, increasing function in posterior plausibility.
\item\label{item:e.3} Path independence: the total information gain of updating the prior plausibility $p$ to the posterior $q$ is independent of the path it is updated.  That is, if there are two paths $p\to r\to q$ and $p\to r'\to q$, then $I(p,r)+I(r,q)=I(p,r')+I(r',q)$.
\item\label{item:e.4} Independence from the choice of unit: whatever unit we choose to describe plausibility, the information gain should have the same value.  That is, $I(tp,tq)=I(p,q)$ for $t>0$.
\item\label{item:e.5} Zero information gain for not updating: for any $p$, we have $I(p,p)=0$.
\end{enumerate}

\begin{prop}\label{prop:e}
Suppose that axioms IG-\ref{item:e.1}--IG-\ref{item:e.5} hold.  Then $I(p,q)=k\log\frac{q}{p}$, where $k>0$ is an arbitrary constant.
\end{prop}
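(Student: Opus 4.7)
The plan is to reduce $I(p,q)$ to a one-variable function via scale invariance, derive a multiplicative Cauchy equation from path independence, and solve it using smoothness.

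First, I would exploit path independence (IG-\ref{item:e.3}) with the trivial ``path'' $r'=p$. Since $I(p,p)=0$ by IG-\ref{item:e.5}, this collapses to the chain rule $I(p,r)+I(r,q)=I(p,q)$ for all admissible $r$. In particular, setting $r=q$ confirms consistency, and setting $r=p$ yields $I(q,p)=-I(p,q)$. Next, scale invariance (IG-\ref{item:e.4}) with $t=1/p$ gives $I(p,q)=I(1,q/p)$, so defining $f(x):=I(1,x)$ we obtain $I(p,q)=f(q/p)$.

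Substituting this representation into the chain rule yields $f(r/p)+f(q/r)=f(q/p)$. Writing $u=r/p$ and $v=q/r$ (so $uv=q/p$), this becomes the Cauchy-type multiplicative equation
\begin{equation*}
f(u)+f(v)=f(uv) \quad \text{for all } u,v>0.
\end{equation*}
By IG-\ref{item:e.2}, $f$ is smooth, so I can either differentiate in $v$ and set $v=1$ to get the ODE $f'(u)u=f'(1)=:k$, whence $f(u)=k\log u+C$, and $f(1)=I(1,1)=0$ forces $C=0$; or appeal directly to the classical fact that every measurable solution of the multiplicative Cauchy equation is a logarithm. Either way, $f(x)=k\log x$, hence $I(p,q)=k\log(q/p)$.

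Finally, IG-\ref{item:e.2} says $I$ is increasing in its second argument, so $f$ is increasing on $(0,\infty)$, which forces $k>0$. The main (and essentially only) obstacle is the functional-equation step, but smoothness reduces it to an elementary ODE, so no delicate regularity argument is needed; the rest is bookkeeping with the axioms.
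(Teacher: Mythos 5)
Your proof is correct, and it takes a genuinely different route from the paper's. The paper differentiates the path-independence identity with respect to $q$ to conclude that $\partial I/\partial q(r,q)$ depends only on $q$, integrates to obtain the additive decomposition $I(r,q)=F(r)+G(q)$, pins down $F=-G$ via IG-5, and only then brings in scale invariance to solve an ODE for $G$. You instead first combine path independence with IG-5 (taking the degenerate intermediate point $r'=p$) to upgrade the axiom to the full cocycle identity $I(p,r)+I(r,q)=I(p,q)$, then use scale invariance with $t=1/p$ to collapse everything to a single-variable function $f(x)=I(1,x)$ satisfying the multiplicative Cauchy equation $f(u)+f(v)=f(uv)$. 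Your route is arguably cleaner: it avoids introducing the two unknown functions $F,G$ and the step of recovering $I$ by integrating a partial derivative, and it isolates exactly where smoothness is used (only to solve the Cauchy equation, where measurability would in fact suffice, so it shows IG-2 could be weakened); the paper's route, by contrast, leans on smoothness earlier and twice. Both arguments implicitly require $p>0$ for the rescaling step, which is harmless here. One trivial slip in an unused aside: the antisymmetry $I(q,p)=-I(p,q)$ follows from setting $q=p$ in the chain rule (i.e., the round trip $p\to r\to p$), not from setting $r=p$, which gives only a tautology; this does not affect the main argument.
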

\begin{proof}
Since by axiom IG-\ref{item:e.2} the information gain $I(p,q)$ is smooth in $q$, it is partially differentiable with respect to $q$ and $I$ can be recovered by integrating its partial derivative.  Differentiating $I(p,r)+I(r,q)=I(p,r')+I(r',q)$ with respect to $q$, we get
\begin{equation}
\frac{\partial I}{\partial q}(r,q)=\frac{\partial I}{\partial q}(r',q).\label{eq:e.1}
\end{equation}
The left-hand side of \eqref{eq:e.1} is a function of $(r,q)$, and the right-hand side of \eqref{eq:e.1} is a function of $(r',q)$. Since $r,r'$ are arbitrary, \eqref{eq:e.1} must be a function of only $q$.  Let $\frac{\partial I}{\partial q}(r,q)=g(q)$.  By integration we get $I(r,q)=F(r)+G(q)$, where $F$ is some function and $G=\int g$.  By the path independence axiom IG-\ref{item:e.3}, we get
\begin{align}
[F(p)+G(r)]+[F(r)+G(q)]=&[F(p)+G(r')]+[F(r')+G(q)]\notag\\
\iff &F(r)+G(r)=F(r')+G(r').\label{eq:e.2}
\end{align}
Since \eqref{eq:e.2} holds for any $r,r'$, $F(r)+G(r)$ is constant, but it must be zero by axiom IG-\ref{item:e.5}: $F(r)+G(r)=I(r,r)=0$.  Therefore $I(p,q)=F(p)+G(q)=G(q)-G(p)$.  By axiom IG-\ref{item:e.4}, we have $G(tq)-G(tp)=G(q)-G(p)$.  Differentiating both sides with respect to $q$, we get $tG'(tq)=G'(q)$.  Multiplying both sides by $q$ and letting $x=tq$, we get $xG'(x)=qG'(q)$, so the function $xG'(x)$ is a constant $k$.  Integrating $G'(x)=k/x$ yields $G(x)=k\log x+C$, hence $I(p,q)=G(q)-G(p)=k\log\frac{q}{p}$.  Since $I$ is increasing in $q$ by axiom IG-\ref{item:e.2}, we get $k>0$.  Clearly this function satisfies all axioms IG-\ref{item:e.1}--IG-\ref{item:e.5}.
\end{proof}

From now on let us normalize the arbitrary constant $k$ to 1, so the information gain is given by $I(p,q)=\log\frac{q}{p}$.  This result,
$$\text{information gain}=\log\frac{\text{posterior plausibility}}{\text{prior plausibility}},$$
is mathematically identical to \cite[p.~4]{goldman1953}, although Goldman takes this as the definition.\footnote{In information theory the quantity $-\log p$ is known as the \emph{self-information}, although I was unable to find a reference for its origin (Tribus \cite{tribus1961} calls it \emph{surprisal}).  Our information gain $I(p,q)=\log\frac{q}{p}$ is the difference of the self-information of the prior and posterior.  Kullback and Leibler \cite{kullback-leibler1951} call $\log\frac{p_1}{p_2}$ the \emph{information for discrimination}, where $p_1,p_2$ are general probabilities and not necessarily the prior and the posterior.  The prior/posterior interpretation of $p$ and $q$ can also be clearly seen in \cite{hobson1969,hobson-cheng1973}.}

In order to make plausible reasoning based on available information, consider the following desiderata.

\begin{enumerate}[I.]
\item\label{item:reason.1} Degrees of plausibility are represented by probabilities.
\item\label{item:reason.2} The robot always takes into account all of the evidence it has relevant to a question.  It does not arbitrarily ignore some of the information, basing its conclusions only on what remains.  In other words, the robot is completely nonideological.
\item\label{item:reason.3} Aristotelian logic: the robot assigns zero plausibility to propositions that contradict its knowledge.
\item\label{item:reason.4} The robot always represents equivalent states of knowledge by equivalent plausibility assignments.  That is, if in two problems the robot's state of knowledge is the same (except perhaps for the labeling of the propositions), then it must assign the same plausibilities in both.
\item\label{item:reason.5} Given prior plausibilities, the robot updates the plausibilities by minimizing the average information gain of the posterior plausibilities subject to known information.  In other words, the robot is maximally conservative.
\end{enumerate}

Desideratum \ref{item:reason.1} is stronger than Jaynes's desideratum J-\ref{item:jaynes.1} because I assume that the plausibility is a probability (\ie, finitely or countably additive measure).  In particular, the plausibilities of mutually exclusive propositions are additive: if $A,B$ are mutually exclusive propositions, then $p(A\lor B)=p(A)+p(B)$.  Desideratum \ref{item:reason.2} is identical to J-\ref{item:jaynes.3b}.  Desideratum \ref{item:reason.3} might be interpreted as a special case of \ref{item:reason.2} and probably needs no justification, but I need it nevertheless.  Desideratum \ref{item:reason.4} is identical to J-\ref{item:jaynes.3c}, Laplace's Principle of Indifference, which may or may not be necessary to prove subsequent theorems.

Desideratum \ref{item:reason.5} is the major difference from Jaynes's axioms.  While Jaynes imposes ``qualitative correspondence with common sense" (J-\ref{item:jaynes.2}), I impose that the robot is maximally conservative. This axiom makes sense, for if the robot radically updates the plausibilities (\ie, not sticking to its prior), then it should not have set up the particular prior plausibilities in the first place.  To avoid unnecessary reference to axiom numbers, let us group the desiderata as follows:
\begin{table}[htbp]
\centering
\begin{tabular}{ll}
\ref{item:reason.1}--\ref{item:reason.3}:& Weak Axioms of Plausible Reasoning\\
\ref{item:reason.1}--\ref{item:reason.4}:& Strong Axioms of Plausible Reasoning\\
IG-\ref{item:e.1}--IG-\ref{item:e.5} and \ref{item:reason.5}:&Minimum Information Gain Principle
\end{tabular}
\end{table}

\section{Implications of the Axioms}
In this section I show that the new axioms imply Bayesian inference, maximum likelihood, maximum entropy principle, and minimum relative entropy principle.
\begin{thm}\label{thm:MREP}
Weak plausibility and the minimum information gain principle imply the minimum relative entropy principle (the minimum discrimination information principle of Kullback \cite[p.~37]{kullback1959}).
\end{thm}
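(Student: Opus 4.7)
The plan is to translate axiom~\ref{item:reason.5} into an explicit optimization problem and recognize it as the one posed by Kullback. After invoking weak plausibility to guarantee that the prior $p$ and posterior $q$ are bona fide probability distributions on a common space of propositions (write $p=(p_i)$, $q=(q_i)$ in the discrete case; densities against a common base measure in the continuous case), I would apply Proposition~\ref{prop:e} pointwise to express the information gain at state $i$ as $\log(q_i/p_i)$. Axiom~\ref{item:reason.5} then instructs the robot to minimize the \emph{average} of these pointwise gains, subject to all relevant information.

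The interpretive key is that the expectation defining ``average information gain'' must be taken against the posterior $q$. Since $q$ is, by assumption, the robot's updated state of belief, any expectation it computes should be under $q$; averaging under the discarded prior $p$ would amount to ignoring part of its current information and would violate axiom~\ref{item:reason.2}. Under this reading the objective becomes
\[
\sum_i q_i\log\frac{q_i}{p_i},
\]
which is precisely Kullback's relative entropy $D(q\|p)$.

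For the feasible set I would collect the constraints provided by the remaining weak axioms: axiom~\ref{item:reason.1} contributes $q_i\ge 0$ and $\sum_i q_i=1$; axiom~\ref{item:reason.3} forces $q_i=0$ on every state incompatible with the robot's knowledge, so the effective support shrinks accordingly; and axiom~\ref{item:reason.2} requires every relevant piece of evidence (typically moment equalities or inequalities in applications) to enter as an additional constraint on $q$. Minimizing $D(q\|p)$ over the resulting feasible set is exactly the minimum discrimination information problem of \cite{kullback1959}, which is what is to be shown. The continuous version follows by replacing sums with integrals without any change in argument.

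The step I expect to be the main obstacle is defending the posterior weighting, since the phrase ``average information gain'' admits the naive alternative reading as the prior-weighted sum $\sum_i p_i\log(q_i/p_i)=-D(p\|q)$, whose minimization is a different (and incorrect) principle. The argument here is conceptual rather than computational: it is precisely axiom~\ref{item:reason.2} that singles out $q$ --- rather than the superseded $p$ --- as the measure against which expectations are to be formed, and this step must be stated carefully for the derivation of the minimum relative entropy principle to be convincing.
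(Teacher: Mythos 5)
Your proposal is correct and follows essentially the same route as the paper: invoke weak plausibility to get probability distributions, apply Proposition~\ref{prop:e} to obtain the pointwise gain $\log(q_i/p_i)$, and minimize the posterior-weighted average $\sum_i q_i\log(q_i/p_i)$ subject to the constraints supplied by the remaining axioms. The posterior-weighting issue you flag is real but is resolved in the paper the same way you resolve it --- the paper simply calls $\sum_i q_i\log(q_i/p_i)$ the \emph{ex post} average information gain, which is exactly your expectation-under-$q$ reading.
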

\begin{proof}
Let $\set{A_i}$ be propositions that are mutually exclusive and exhaustive.  Let $p_i=p(A_i|I)$ be the prior plausibility of proposition $A_i$ given background information $I$, and $q_i=p(A_i|I')$ be the posterior plausibility to be computed given the new information $I'$.  By desideratum \ref{item:reason.1}, we have $p_i,q_i\ge 0$ and $\sum p_i=\sum q_i=1$.  Since by Proposition \ref{prop:e} the information gain of $A_i$ is $\log\frac{q_i}{p_i}$, the \emph{ex post} average information gain is
$$H(q;p):=\sum_{i=1}^nq_i\log \frac{q_i}{p_i},$$
the relative entropy.\footnote{This quantity was first proposed by Kullback and Leibler \cite{kullback-leibler1951}, which they call, appropriately, ``information".}  By desiderata \ref{item:reason.2} and \ref{item:reason.5}, the robot minimizes $H(q;p)$ subject to all known information $I'$ and the constraints $q_i\ge 0$, $\sum q_i=1$, which is precisely the minimum relative entropy principle.
\end{proof}

\begin{cor}\label{cor:MEP}
Strong plausibility and the minimum information gain principle imply the maximum entropy principle of Jaynes \cite{jaynes1957a} for setting up priors.
\end{cor}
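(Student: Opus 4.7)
The plan is to combine Theorem \ref{thm:MREP} with Laplace's Principle of Indifference (desideratum \ref{item:reason.4}, which is the extra ingredient in ``strong plausibility") by showing that, in the absence of any distinguishing prior information, the minimum relative entropy principle collapses into the maximum entropy principle.

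First I would explain why strong plausibility forces a uniform starting point. When setting up priors over mutually exclusive and exhaustive propositions $\set{A_i}_{i=1}^n$, by definition the robot has no information distinguishing among the $A_i$'s. Therefore the robot's state of knowledge is invariant under relabeling of the propositions, so desideratum \ref{item:reason.4} forces the ``pre-prior" plausibilities $p_i$ to be equal; combined with desideratum \ref{item:reason.1} ($p_i\ge 0$, $\sum_i p_i=1$) this yields $p_i=1/n$ for all $i$.

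Next I would invoke Theorem \ref{thm:MREP}: given this uniform pre-prior $p$ and whatever testable information $I'$ the robot has, the posterior $q=(q_1,\dots,q_n)$ must minimize
\begin{equation*}
H(q;p)=\sum_{i=1}^n q_i\log\frac{q_i}{1/n}=\log n+\sum_{i=1}^n q_i\log q_i
\end{equation*}
subject to $I'$, $q_i\ge 0$, and $\sum_i q_i=1$. Since $\log n$ is a constant, minimizing $H(q;p)$ is equivalent to minimizing $\sum_i q_i\log q_i$, i.e., to maximizing the Shannon entropy $-\sum_i q_i\log q_i$ subject to the same constraints. This is precisely Jaynes's maximum entropy principle for setting up priors.

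The main obstacle, such as it is, is really a conceptual one rather than a computational one: one must argue convincingly that ``no information" indeed produces an equivalent state of knowledge under permutation of labels, so that desideratum \ref{item:reason.4} applies and pins down $p_i=1/n$. Once that step is granted, the reduction of relative entropy to entropy is an immediate algebraic identity and no further work is needed.
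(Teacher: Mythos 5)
Your proposal is correct and follows essentially the same route as the paper: desideratum \ref{item:reason.4} (Laplace's Principle of Indifference) together with \ref{item:reason.1} pins down the uniform prior $p_i=1/n$, and then Theorem \ref{thm:MREP} reduces minimizing $H(q;p)=\log n+\sum_i q_i\log q_i$ to maximizing Shannon entropy. Your extra remarks on why relabeling invariance applies are a harmless elaboration of the same step.
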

\begin{proof}
Desideratum \ref{item:reason.4} is nothing but Laplace's Principle of Indifference.  Hence, by desideratum \ref{item:reason.1}, the robot assigns the prior plausibility $p(A_i)=\frac{1}{n}$.  By Theorem \ref{thm:MREP} the robot computes the posterior plausibility $p_i=p(A_i|I)$ by minimizing
$$\sum_{i=1}^n p_i\log \frac{p_i}{1/n}=\sum_{i=1}^n p_i(\log p_i+\log n)=\sum_{i=1}^n p_i\log p_i+\log n,$$
(where we have invoked desideratum \ref{item:reason.1}: $\sum p_i=1$) or equivalently, maximizing Shannon's entropy $H(p)=-\sum_{i=1}^n p_i\log p_i$ \cite{shannon1948}.  This is precisely Jaynes's maximum entropy principle \cite{jaynes1957a}.
\end{proof}

I propose to define the Optimum Information Principle by the combination of the weak or strong plausibility and the minimum information gain principle, despite its implication is the well-known minimum relative entropy principle.  There are two reasons to avoid the term ``entropy". First, ``entropy" is a misnomer both in physics (see \cite{ben-naim2008}) and in information theory.  According to \cite{tribus-mcirvine1971}, Shannon \cite{shannon1948} named his measure of uncertainty or missing information ``entropy" following the advice of von Neumann: ``[It] has been used in statistical mechanics under that name \dots [and] no one knows what entropy really is, so in a debate you will always have the advantage."  Clausius coined the word ``entropy" after the Greek word for ``transformation"; given that ``entropy" is a misnomer, adding the adjective ``relative" makes it only worse.  Second, as a measure of information gain the Kullback-Leibler information $H(q;p)$ is more fundamental than the Shannon entropy $H(p)$ as shown by the above axiomatic derivation as well as the comparison of the two information measures provided in \cite{hobson-cheng1973}: the Kullback-Leibler information, unlike the Shannon entropy, extends to arbitrary probability measures and it satisfies an additivity property.  Since by desideratum \ref{item:reason.5} the quantity $H(q;p)$ is the average information gain, and since Kullback and Leibler \cite{kullback-leibler1951} call $H(q;p)$ ``information" before the term ``relative entropy" was coined, the term Optimum Information Principle seems best.\footnote{\cite{haken2004} calls it the Maximum Information Principle, meaning that the missing information is maximized.  Maximizing the missing information is equivalent to minimizing the information gain as we do here.  The adjective ``optimum" avoids the confusion between maximum/minimum.}

\begin{thm}\label{thm:Bayes}
Weak plausibility and the minimum information gain principle imply Jaynes's desiderata \ref{item:jaynes.1}--\ref{item:jaynes.3b}, in particular Bayes's rule.  Therefore, the Optimum Information Principle is consistent with Bayesian inference.
\end{thm}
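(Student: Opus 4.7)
The plan is threefold: match Jaynes's desiderata against the new axioms where the correspondence is immediate; derive Bayes's rule as an application of Theorem~\ref{thm:MREP}; and deduce the remaining qualitative properties from Bayes's rule. The first part is free: J-\ref{item:jaynes.1} is implied by \ref{item:reason.1} (probabilities are real numbers), and J-\ref{item:jaynes.3b} is word-for-word \ref{item:reason.2}.

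For Bayes's rule I would consider two families of mutually exclusive and exhaustive propositions $\{A_i\}$ and $\{B_j\}$, let $p_{ij} = p(A_i \land B_j \mid C)$ be the joint prior plausibility given background $C$, and suppose the new information is that $B_{j_0}$ holds. Aristotelian logic (\ref{item:reason.3}) forces $q_{ij} = 0$ whenever $j \neq j_0$, and additivity (\ref{item:reason.1}) gives $\sum_i q_{i j_0} = 1$. With the convention $0\log 0 = 0$ the relative entropy collapses to
$$H(q;p) = \sum_i q_{i j_0}\log\frac{q_{i j_0}}{p_{i j_0}},$$
and by Theorem~\ref{thm:MREP} the robot minimizes this under the single linear constraint. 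A routine Lagrange-multiplier computation then yields $q_{i j_0} = p_{i j_0}/\sum_k p_{k j_0}$, that is,
$$p(A_i \mid B_{j_0} \land C) = \frac{p(A_i \land B_{j_0} \mid C)}{p(B_{j_0} \mid C)},$$
which is Bayes's rule.

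Once Bayes's rule is in hand the rest is bookkeeping. The common-sense relations J-\ref{item:jaynes.2} follow at once: \eqref{eq:2.1} from $p(A\mid C) + p(\lnot A\mid C) = 1$, and \eqref{eq:2.2} by factoring $p(A \land B \mid C) = p(B \mid A \land C)\,p(A \mid C)$ and substituting the two hypotheses. Internal consistency J-\ref{item:jaynes.3a} follows from strict convexity of $q \mapsto H(q;p)$: the minimizer under any fixed collection of linear constraints is unique, so distinct chains of reasoning compatible with the same information must produce identical posteriors.

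The main obstacle I anticipate is justifying the central reduction of the joint minimization on $(i,j)$ to the slice $j = j_0$. The Aristotelian zeros are imposed as hard constraints rather than emerging endogenously from the variational problem, so one must verify that this is consistent with the functional (using $0\log 0 = 0$, and noting that any positive mass on $\{j \neq j_0\}$ would violate \ref{item:reason.3} rather than merely inflate $H(q;p)$). Once this reduction is accepted, the remainder of the argument is a single Lagrange-multiplier calculation followed by trivial algebraic manipulation.
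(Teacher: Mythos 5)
Your derivation of Bayes's rule is essentially the paper's own argument: impose the Aristotelian zeros $q(A_i\land\lnot B)=0$ as hard constraints together with normalization, minimize the relative entropy by Theorem~\ref{thm:MREP}, and read off $q_i=p_i/\sum_k p_k$ from the single Lagrange multiplier. Your treatment of J-\ref{item:jaynes.1}, J-\ref{item:jaynes.3b}, and the two common-sense relations \eqref{eq:2.1}--\eqref{eq:2.2} also matches the paper (probability representation for the first, Bayes's rule for the second). Your worry about the reduction to the slice $j=j_0$ is handled the same way the paper handles it: the zeros come from desideratum \ref{item:reason.3} as constraints of the variational problem, not from the functional.

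The genuine gap is in your argument for J-\ref{item:jaynes.3a}. You claim that strict convexity and uniqueness of the minimizer ``under any fixed collection of linear constraints'' settles consistency. But the relevant situation is two \emph{chains} of updates, $I_0\to I_1\to I_2$ versus $I_0\to I_1'\to I_2$: after the first step the two chains have produced \emph{different} intermediate posteriors $p_1\neq p_1'$, so the second-stage problems are minimizations of $H(\cdot\,;p_1)$ and $H(\cdot\,;p_1')$ --- different objective functions, not the same problem with a unique solution. Uniqueness within each problem says nothing about agreement across them. What is actually needed is the additivity (Pythagorean) property of the minimized Kullback--Leibler information, $H(p_2;p_0)=H(p_2;p_1)+H(p_1;p_0)$, which is the Additivity Theorem of Hobson and Cheng that the paper invokes to obtain the path-independence identity \eqref{eq:2.5}; it is a nontrivial theorem about iterated I-projections onto nested constraint sets, not a consequence of convexity alone. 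As written, your proof of J-\ref{item:jaynes.3a} does not go through; the rest is sound.
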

\begin{proof}
Let us first prove Bayes's rule.  Suppose that the robot is given background information $I$ and that the robot has prior plausibilities on the propositions $A_1,\dotsc,A_n$, $B$, and any logical conjunction or negation generated by them.  Therefore the prior plausibilities of $A_i\land A_j$, $A_i\land B$, $A_i\land(\lnot B)$, etc., which are denoted by $p(A_i\cap A_j|I)$, $p(A_i\cap B|I), p(A_i\cap B^c|I)$, etc., are well defined.  The task of the robot is to update the plausibilities of $\set{A_i}$ when it is given additional information $B$.  Since there are only a finite number of propositions, without loss of generality we may assume that $\set{A_i}$ are mutually exclusive and exhaustive.  By desideratum \ref{item:reason.1}, we have $\sum_{i=1}^np(A_i|I)=1$.

Let us denote the posterior plausibilities by $q(A_i\cap B|B\cap I)$, etc.  In order to compute them, by Theorem \ref{thm:MREP} the robot solves
\begin{subequations}\label{eq:2.3}
\begin{align}
&\min_q \sum q\log\frac{q}{p}\quad \text{subject to}\label{eq:2.3a}\\
&\sum_{i=1}^n (q(A_i\cap B|B\cap I)+q(A_i\cap B^c|B\cap I))=1,\label{eq:2.3b}\\
&(\forall i)~q(A_i\cap B^c|B\cap I)=0,\label{eq:2.3c}
\end{align}
\end{subequations}
where $p,q$ in \eqref{eq:2.3a} take all possible forms of $p(A_i\cap B|I), q(A_i\cap B|B\cap I)$ and $p(A_i\cap B^c|I), q(A_i\cap B^c|B\cap I)$.  Conditions \eqref{eq:2.3b} and \eqref{eq:2.3c} come from desiderata \ref{item:reason.1} and \ref{item:reason.3}: since $\lnot B$ (and hence $A_i\land (\lnot B)$) is logically impossible knowing $B$, the robot assigns zero plausibility to $A_i\land (\lnot B)$.  That we impose \eqref{eq:2.3b} and \eqref{eq:2.3c} and nothing else comes from using all relevant information as in desideratum \ref{item:reason.2}.

Since the function $f(q)=q\log \frac{q}{p}$ is continuous and strictly convex and the constraints \eqref{eq:2.3b}, \eqref{eq:2.3c} constitute a compact convex set, we can apply the Karush-Kuhn-Tucker theorem to solve \eqref{eq:2.3}.  Let $\lambda$ be the Lagrange multiplier corresponding to \eqref{eq:2.3b}.  The Lagrangian is
$$L(q,\lambda)=\sum_{i=1}^nq_i\log\frac{q_i}{p_i}+\lambda\left(\sum_{i=1}^nq_i-1\right),$$
where we have used the shorthand $q_i=q(A_i\cap B|B\cap I)$ and $p_i=p(A_i\cap B|I)$.  The first-order condition, which is necessary and sufficient, reads $$\frac{\partial L}{\partial q_i}=\log \frac{q_i}{p_i}+1+\lambda=0.$$
This shows that $q_i$ is proportional to $p_i$, so by $\sum q_i=1$ we obtain $q_i=p_i/\sum_{i=1}^n p_i$.  Therefore,
\begin{align}
q(A_i|B\cap I)&=q(A_i\cap B|B\cap I)+q(A_i\cap B^c|B\cap I)\notag\\
&=q(A_i\cap B|B\cap I)=q_i\notag\\
&=\frac{p(A_i\cap B|I)}{\sum_{i=1}^n p(A_i\cap B|I)}=\frac{p(A_i\cap B|I)}{p(B|I)},\label{eq:2.4}
\end{align}
where the first equality holds because $q$ is a probability (desideratum \ref{item:reason.1}), the second equality holds because $q(A_i\cap B^c|B\cap I)=0$ (desideratum \ref{item:reason.3}), and the last equality holds because $p$ is a probability and $\set{A_i}$ are mutually exclusive and exhaustive.  \eqref{eq:2.4} is precisely the Bayes rule.

Now let us show that Jaynes's desiderata \ref{item:jaynes.1}--\ref{item:jaynes.3b} are implied.  All we need to show are desiderata \ref{item:jaynes.2} (conditions \eqref{eq:2.1} and \eqref{eq:2.2}) and \ref{item:jaynes.3a}.  \eqref{eq:2.1} holds because plausibility has a probability representation by desideratum \ref{item:reason.1}.  \eqref{eq:2.2} holds by Bayes's rule, which we have already deduced in \eqref{eq:2.4}.  Desideratum \ref{item:jaynes.3a} holds by the Additivity Theorem of Hobson and Cheng \cite[p.~308]{hobson-cheng1973}, where they essentially show that if the robot has initial background information $I_0$ that gets updated to $I_1$ and then to $I_2$, with plausibilities $p_0,p_1,p_2$ respectively, then
$$H(p_2;p_0)=H(p_2;p_1)+H(p_1;p_0),$$
that is, the (minimized) Kullback-Leibler information is additive.\footnote{This property is mathematically equivalent to the ``Subset Independence" axiom of Shore and Johnson \cite[p.~27]{shore-johnson1980}.}  In particular, if there are two ways to update, $I_0\to I_1\to I_2$ and $I_0\to I_1'\to I_2$, then we obtain
\begin{equation}
H(p_2;p_1)+H(p_1;p_0)=H(p_2;p_1')+H(p_1';p_0),\label{eq:2.5}
\end{equation}
the path independence.  Therefore Jaynes's desideratum \ref{item:jaynes.3a} holds because if a conclusion can be reasoned out in more than one way, the path independence property \eqref{eq:2.5} ensures that every possible way leads to the same result.
\end{proof}

At this point I stress the distinction between our axiomatization and other author's.  In his seminal work \cite{shannon1948}, Shannon imposes as the third axiom ``If a choice be broken down into two successive choices, the original $H$ should be the weighted sum of the individual values of $H$", in which he implicitly uses Bayes's rule.  The same remark applies to the axiomatization of the Kullback-Leibler information by Hobson \cite{hobson1969}.  Similarly, in the important axiomatization of the maximum entropy principle, Shore and Johnson \cite{shore-johnson1980} implicitly use Bayes's rule in their fourth axiom ``Subset Independence: It should not matter whether one treats an independent subset of system states in terms of a separate conditional density or in terms of the full system density".  Zellner \cite{zellner1988} derives Bayes's rule from an ``information processing rule", but it is not clear how it relates to maximum entropy and his definition of information seems somewhat arbitrary.  On the contrary, Cox \cite{cox1946} and Jaynes \cite{jaynes2003} derive the Bayes rule from intuitively appealing first principles, as we have done.  In addition we have derived the maximum entropy principle and the minimum relative entropy principle.

Finally, let us show that the Optimum Information Principle implies maximum likelihood.
\begin{thm}\label{thm:ML}
The Optimum Information Principle implies the maximum likelihood principle of Fisher \cite{fisher1912}.\footnote{Although maximum likelihood is attributed to Fisher, it was already used by Laplace and Gauss a century before.}
\end{thm}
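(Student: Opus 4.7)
The plan is to combine Bayes's rule (Theorem \ref{thm:Bayes}) with Laplace's Principle of Indifference (desideratum \ref{item:reason.4}), in the same spirit in which Corollary \ref{cor:MEP} reduced the maximum entropy principle to the minimum relative entropy principle applied to a uniform prior. Accordingly the argument will use the strong form of the Optimum Information Principle, that is, strong plausibility together with the minimum information gain principle.

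First, I would formalize the setup. Consider a parametric family $\set{p(\cdot|\theta):\theta\in\Theta}$ of probability distributions on the observable outcomes, indexed by a finite parameter set $\Theta=\set{\theta_1,\dotsc,\theta_K}$. The background information $I$ consists of the knowledge that the data are drawn from some member of this family; the new information is the observed datum $x$. The task of the robot is to assign posterior plausibilities $q(\theta_k|x\cap I)$ and to report its best point estimate of the parameter.

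Second, by desideratum \ref{item:reason.4} the robot, having no basis to distinguish the $\theta_k$'s before observing the data, assigns the uniform prior $p(\theta_k|I)=1/K$. Then Bayes's rule, as deduced in Theorem \ref{thm:Bayes}, gives
$$q(\theta_k|x\cap I)=\frac{p(x|\theta_k)p(\theta_k|I)}{\sum_{j=1}^K p(x|\theta_j)p(\theta_j|I)}=\frac{p(x|\theta_k)}{\sum_{j=1}^K p(x|\theta_j)},$$
which is proportional to the likelihood $L(\theta_k):=p(x|\theta_k)$. Hence the parameter value that maximizes the posterior plausibility coincides with the one that maximizes the likelihood, which is precisely Fisher's maximum likelihood estimate.

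The main subtlety will be the invocation of the Principle of Indifference on the parameter space. In the finite case above the argument is transparent, but for a continuous $\Theta$ a uniform prior is generally improper and depends on the parametrization; one must either appeal to a canonical parametrization or pass to a limit so that the posterior mode is well defined and still coincides with the likelihood maximizer. Granting this, the remaining steps are immediate consequences of Theorem \ref{thm:Bayes} and routine algebra.
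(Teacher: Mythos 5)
Your argument reaches maximum likelihood by a genuinely different route from the paper's. You apply Bayes's rule (Theorem \ref{thm:Bayes}) to a uniform prior on a finite parameter set, justified by the Principle of Indifference (desideratum \ref{item:reason.4}), so that the posterior is proportional to the likelihood and the posterior mode coincides with Fisher's estimate. The paper never places a prior on $\Theta$ at all: it interprets the parametric model $f(\cdot;\theta)$ as the \emph{prior} and the unknown true density $f$ as the \emph{posterior}, invokes the minimum information gain principle to say the statistician should choose $\theta$ minimizing the Kullback--Leibler information $H(f;f_\theta)=\int f\log\frac{f}{f_\theta}$, and then uses the law of large numbers to replace $\int f\log f_\theta$ by $\frac{1}{N}\sum_{n=1}^N\log f(x_n;\theta)$ up to a $\theta$-independent constant, so that minimizing information gain is (asymptotically) maximizing the log likelihood. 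The paper's route needs only the minimum information gain principle, works for continuous $\Theta$ with no issue of improper or parametrization-dependent uniform priors, and delivers the sum-of-log-likelihoods objective for $N$ i.i.d.\ observations directly; its cost is that the identification of model with prior and truth with posterior, and the large-$N$ approximation, are themselves interpretive leaps. Your route is exact in finite samples and stays entirely within the machinery already built in Theorem \ref{thm:Bayes} and Corollary \ref{cor:MEP}.

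The one genuine gap in your version is the final step from ``the posterior plausibility is proportional to the likelihood'' to ``the robot reports the likelihood maximizer.'' Desiderata \ref{item:reason.1}--\ref{item:reason.5} tell the robot how to assign and update plausibilities; nothing in them instructs the robot to summarize the posterior by its mode rather than, say, its mean, so the maximum a posteriori step is an additional decision rule you would need to postulate---and it is that extra rule, not the Optimum Information Principle alone, that produces maximum likelihood in your argument. You are also right to flag the continuous-$\Theta$ difficulty; the paper's proof is precisely a way of sidestepping both of these issues, at the price of its own heuristics.
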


\begin{proof}
Suppose that $\set{X_n}_{n=1}^N$ are independently and identically distributed random variables with an unknown density $f$.  Given the realizations $\set{x_n}$, suppose that the statistician wishes to fit a parametric density $f(x;\theta)$ to $f$, where $\theta\in\Theta$ is a parameter.  Although prior and posterior distributions are meaningless for a frequentist, it is natural to interpret that the model $f(x;\theta)$ and the truth $f$ correspond to the prior and posterior, respectively.  Hence to make an optimal inference the statistician should choose $\theta$ so as to minimize the Kullback-Leibler information
$$H(f;f_\theta)=\int f(x)\log\frac{f(x)}{f(x;\theta)}\diff x.$$
However, by the law of large numbers we obtain
\begin{align*}
H(f;f_\theta)&=\int f\log f-\int f\log f_\theta=\int f\log f-\E_f[\log f(X;\theta)]\\
&\approx \int f\log f-\frac{1}{N}\sum_{n=1}^N\log f(x_n;\theta),
\end{align*}
so the statistician should maximize the log likelihood $\sum_{n=1}^N\log f(x_n;\theta)$.
\end{proof}

\section{Concluding Remarks}
The maximum entropy principle has occasionally been criticized \emph{ad hoc} as ``Why maximize entropy (or minimize relative entropy), why not other functions?".  An inference method is valuable if and only if it is useful in analyzing real data, and therefore an inference method requires no interpretation, and no justification except practical usefulness. (Nevertheless the justification of the maximum entropy principle has been provided \cite{jaynes1957a,jaynes1968,jaynes1982,shore-johnson1980}.)  It is well-known that the minimum relative entropy principle (maximum entropy principle) and Bayesian inference are useful (see \cite{loredo1990,jaynes2003} and the references therein).  Therefore, since our Optimum Information Principle implies the minimum relative entropy principle, Bayes's rule, as well as maximum likelihood, it should be equally useful.  In addition we have axiomatized plausible reasoning and derived the maximum entropy principle; hence we have answered the question ``why maximize entropy?"

\bibliographystyle{plain}


\begin{thebibliography}{10}

\bibitem{bayes1763}
Rev.~Thomas Bayes.
\newblock An essay toward solving a problem in the doctrine of chances.
\newblock {\em Royal Society Philosophical Transactions}, 53:370--418, 1763.

\bibitem{ben-naim2008}
Arieh Ben-Naim.
\newblock {\em A Farewell to Entropy: Statistical Thermodynamics Based on
  Information}.
\newblock World Scientific, Hackensack, NJ, 2008.

\bibitem{cox1946}
Richard~T. Cox.
\newblock Probability, frequency, and reasonable expectation.
\newblock {\em American Journal of Physics}, 14(1):1--13, 1946.

\bibitem{fisher1912}
Ronald~A. Fisher.
\newblock On an absolute criterion for fitting frequency curves.
\newblock {\em Messenger of Mathematics}, 41:155--160, 1912.

\bibitem{goldman1953}
Stanford Goldman.
\newblock {\em Information Theory}.
\newblock Prentice \& Hall, Upper Saddle River, NJ, 1953.

\bibitem{haken2004}
Hermann Haken.
\newblock {\em Synergetics: Introduction and Advanced Topics}.
\newblock Springer, Berlin, 2004.

\bibitem{hobson1969}
Arthur Hobson.
\newblock A new theorem of information theory.
\newblock {\em Journal of Statistical Physics}, 1(3):383--391, 1969.

\bibitem{hobson-cheng1973}
Arthur Hobson and Bin-Kang Cheng.
\newblock A comparison of the {S}hannon and {K}ullback information measures.
\newblock {\em Journal of Statistical Physics}, 7(4):301--310, 1973.

\bibitem{jaynes1957a}
Edwin~T. Janyes.
\newblock Information theory and statistical mechanics, {I}.
\newblock {\em Physical Review}, 106(4):620--630, 1957.

\bibitem{jaynes1968}
Edwin~T. Janyes.
\newblock Prior probabilities.
\newblock {\em IEEE Transactions on Systems Science and Cybernetics},
  SSC-4(3):227--241, 1968.

\bibitem{jaynes1982}
Edwin~T. Janyes.
\newblock On the rationale of maximum-entropy methods.
\newblock {\em Proceedings of the IEEE}, 70(9):939--952, 1982.

\bibitem{jaynes2003}
Edwin~T. Jaynes.
\newblock {\em Probability Theory: The Logic of Science}.
\newblock Cambridge University Press, Cambridge, U.K., 2003.
\newblock Edited by G. Larry Bretthorst.

\bibitem{jeffreys1939}
Harold Jeffreys.
\newblock {\em Theory of Probability}.
\newblock Clarendon Press, Oxford, 1939.

\bibitem{keynes1921}
John~Maynard Keynes.
\newblock {\em A Treatise on Probability}.
\newblock Macmillan, London, 1921.

\bibitem{kullback1959}
Solomon Kullback.
\newblock {\em Information Theory and Statistics}.
\newblock John Wiley \& Sons, New York, 1959.

\bibitem{kullback-leibler1951}
Solomon Kullback and Richard~A. Leibler.
\newblock On information and sufficiency.
\newblock {\em Annals of Mathematical Statistics}, 22(1):79--86, 1951.

\bibitem{laplace1812}
Pierre~Simon Laplace.
\newblock {\em Th{\'e}orie Analytique des Probabilit{\'e}s}.
\newblock Courcier, Paris, 1812.

\bibitem{loredo1990}
Thomas~J. Loredo.
\newblock From {Laplace} to supernova {SN 1987A}: {B}ayesian inference in
  astrophysics.
\newblock In P.~F. Foug{\`e}re, editor, {\em Maximum Entropy and {B}ayesian
  Methods}, pages 81--142. Kluwer Academic Publishers, Dordrecht, The
  Netherlands, 1990.

\bibitem{shannon1948}
Claude~E. Shannon.
\newblock A mathematical theory of communication.
\newblock {\em Bell System Technical Journal}, 27:379--423,~623--656, 1948.

\bibitem{shore-johnson1980}
John~E. Shore and Rodney~W. Johnson.
\newblock Axiomatic derivation of the principle of maximum entropy and the
  principle of minimum cross-entropy.
\newblock {\em IEEE Transactions on Information Theory}, IT-26(1):26--37, 1980.

\bibitem{tribus1961}
Myron Tribus.
\newblock {\em Thermostatics and Thermodynamics : An Introduction to Energy,
  Information and States of Matter, with Engineering Applications}.
\newblock Van Nostrand, New York, 1961.

\bibitem{tribus-mcirvine1971}
Myron Tribus and Edward~C. Mc{I}rvine.
\newblock Energy and information.
\newblock {\em Scientific American}, 224(3):179--188, 1971.

\bibitem{vancampenhout-cover1981}
Jan~M. Van~Campenhout and Thomas~M. Cover.
\newblock Maximum entropy and conditional probability.
\newblock {\em IEEE Transactions on Information Theory}, IT-27(4):483--489,
  1981.

\bibitem{zellner1988}
Arnold Zellner.
\newblock Optimal information processing and {B}ayes's theorem.
\newblock {\em American Statistician}, 42(4):278--280, 1988.

\end{thebibliography}

\end{document}